\newcommand{\T}{{\sf T}}
\DeclareMathOperator{\tr}{tr}
\DeclareMathOperator{\diag}{diag}
\newtheorem{theorem}{Theorem}
\newtheorem{proposition}{Proposition}
\newtheorem*{remark}{Remark}
\newtheorem{assumption}{Assumption}
\def\BibTeX{{\rm B\kern-.05em{\sc i\kern-.025em b}\kern-.08em
    T\kern-.1667em\lower.7ex\hbox{E}\kern-.125emX}}
\begin{document}
\title{Asymptotic CRB Analysis of Random RIS-Assisted Large-Scale Localization Systems}

\author{\IEEEauthorblockN{Zhengyu Wang, Hongzheng Liu, Rujing Xiong, Fuhai Wang and Robert Caiming Qiu }\\
\IEEEauthorblockA{\textit{ School of Electronic Information and Communications } \\
\textit{ Huazhong University of Science and Technology, Wuhan 430074, China }\\
Email: \{wangzhengyu,hongzhengliu,Rujing,wangfuhai,caiming\}@hust.edu.cn }

        % <-this % stops a space
\thanks{This work was supported by the National Natural Science Foundation of
China under Grant 12141107.}% <-this % stops a space
}

\maketitle

\begin{abstract}

This paper studies the performance of a randomly RIS-assisted multi-target localization system, in which the configurations of the RIS are randomly set to avoid high-complexity optimization.
We first focus on the scenario where the number of RIS elements is significantly large, and then obtain the scaling law of  Cramér–Rao bound (CRB) under certain conditions, which shows that CRB decreases in the third or fourth order as the RIS dimension increases.
Second, we extend our analysis to large systems where both the number of targets and sensors is substantial. 
Under this setting, we explore two common RIS models: the constant module model and the discrete amplitude model, and illustrate how the random RIS configuration impacts the value of CRB.
Numerical results demonstrate that asymptotic formulas provide a good approximation to the exact CRB in the proposed randomly configured RIS systems.

\end{abstract}

\begin{IEEEkeywords}
Reconfigurable intelligent surface,\,random configuration,\,CRB,\,scaling law,\,random matrix theory
\end{IEEEkeywords}

%%%------------- introduction-----------------------------
\section{Introduction}
Reconfigurable intelligent surface (RIS) is a novel technology in wireless communications and has attracted growing research interest in recent years.
A RIS is a plane containing numerous regularly arranged electromagnetic elements and can help manipulate impinging electromagnetic waves by designing the structure and parameters of the elements.

Since the RIS can provide a virtual Line-of-Sight (LOS) link and improve the positioning performance by suppressing interference and enhancing useful signals, extensive research on RIS-assisted localization has been conducted \cite{zhang2021metalocalization}\cite{keykhosravi2023leveraging}.
The CRB sets a lower bound on the variance of any unbiased estimator and is often used as a performance metric in the RIS optimization strategy.
For example, the author in \cite{10138058} jointly optimizes the transmit beamforming at the Access Point (AP) and the reflective beamforming at the RIS to minimize the CRB on the estimation error.

The spacing between independent elements decreases as the operating frequency increases. 
In the millimeter-wave or even terahertz frequency bands, there will be hundreds or thousands of units on an RIS.
In such scenarios, solving the optimization problem of RIS becomes difficult or even impossible, due to the exponential increase of the search domain.
As an alternative, recent proposals suggest the use of random phase configurations to avoid channel estimation and non-convex optimization \cite{chen2022non}\cite{psomas2020random}\cite{zappone2021intelligent}\cite{tegos2021distribution}.
The performance of a wireless communication network assisted by randomly reconfigurable surfaces is first studied in \cite{tegos2021distribution}.
Specifically, closed-form expressions for the outage probability, the
average received signal-to-noise ratio (SNR) and the ergodic capacity are derived to evaluate the performance of the considered system.
To the best of the authors’ knowledge, the performance of a randomly RIS-assisted localization system has not yet been investigated in the existing literature.
Motivated by this challenge, in this paper, we consider a randomly large RIS-assisted direction-of-arrival (DoA) estimation.
This paper aims to address the following two questions:
What impact does a random phase setting have on the performance of the localization system?
What is the asymptotic performance of the CRB in a high-dimensional system?

Some work has explored the impact of RIS parameters on CRB. 
For example, in \cite{hu2018beyond}, the author derived the asymptotic expression of the CRB under the large intelligent surfaces (LIS) system, demonstrating that the CRLB for all three Cartesian dimensions decreases quadratically in the surface area of the LIS under mild conditions.
Additionally, the author in \cite{fang2023multiirsenabled} derives the closed-form CRB in a multi-RIS enabled Integrated Sensing and Communications (ISAC) system.
The CRB for angle-of-arrival (AoA) estimation is inversely proportional to the cubic dimension of each RIS in the case of a point target.
However, it is crucial to note that this conclusion is limited to single-user estimation scenarios.
\cite{stoica1989music} and \cite{stoica1990music} derive the asymptotic CRB formula when the number of users or sensors tends to infinity, and simulations verify that the asymptotic CRLB provide accurate approximations to the exact CRB.
Nevertheless, the above asymptotic conclusion cannot be directly applied to the RIS system.
The scaling law of CRB of the RIS-assisted localization system has not been investigated in the literature yet.
The contributions of this work are listed as follows: 
\begin{itemize}
\item 
Firstly, we introduce random RIS and derive the scaling law of the CRB under the assumption given in \Cref{ass:iid}.
The resulting asymptotic CRB expression provides an easily evaluated form, explicitly revealing the influence of RIS dimension, number of targets, and number of sensors on CRB.
\item 
Secondly, for randomly RIS-assisted large systems, we introduce an analysis framework based on random matrix theory.
Additionally, we analyze how the distribution of random RIS, for certain models such as the random phase model, discrete amplitude model, etc., influences the CRB.
\item 
The scaling law of CRB shows that CRB decreases in the third or fourth order as the RIS dimension increases. 
Notably, CRB has nothing to do with the specific values for the incident and departure angles but relies on parameters such as SNR and sensor number.
\end{itemize}

\textit{Notations}: 
Upper-case and lower-case boldface letters denote matrices and column vectors, respectively.
The operators $(\cdot)^T$ and $(\cdot) ^H$ denote the transpose, and Hermitian transpose respectively.
Additionally, $\odot$ represents the Hadamard matrix product.
$\tr(\cdot)$ is the trace of a matrix. 
$j$ denotes the imaginary unit.
$ \diag(\textbf{x}) $ returns a diagonal matrix with the elements in $\textbf{x}$ as its main diagonal entries.

%%%%-----------------------22222 ------------------------------
\section{Problem Formulation}
Consider a uniform linear array (ULA) RIS-assisted DoA estimation system, where there are $R$ sensors to sense $K$ single-antenna targets
We assume that $K$ narrowband far-field targets impinge on the RIS from unknown directions of $[\theta_1, \dots, \theta_K]$.
DoA estimation can be formulated as a parameter estimation problem with the following model:
\begin{align}
    \textbf{y}(t) = \textbf{A}_{\Phi} \bm{\Omega}(t) \textbf{A}_{\Theta} \textbf{x}(t) + \textbf{n}(t),
\end{align}
where $\textbf{A}_{\Theta} = [\textbf{a}(\theta_1),\dots,\textbf{a}(\theta_K)]  \in \mathcal{C}^{N \times K} $ and $\textbf{A}_{\Phi}=[\textbf{a}^\T(\phi_1),\dots,\textbf{a}^\T(\phi_R)]^\T  \in \mathcal{C}^{R \times N}$ denote the array manifold matrix from targets to the RIS and from RIS to sensors, respectively. 
$\textbf{n}(t) \in \mathcal{CN}(0,\sigma^2 \textbf{\textit{I}}_R)$ is the addictive white Gaussian noise vector where $\textbf{\textit{I}}_R$ is a $R \times R$ identity matrix and $\sigma^2$ denotes the power of the noise.
The steering vector of target $k$ at DoA $\theta_k$ is given by:
\begin{align*}
    \textbf{a}(\theta_k) = [1,~e^{j\theta_k},\ldots, e^{j(N-1)\theta_k}]^\T.
\end{align*}

The response matrix $\bm{\Omega}=\diag(\bm{\omega})\in \mathcal{C}^{N \times N}$, where $\bm{\omega}=(\beta_1 e^{j \alpha_1}, \beta_2 e^{j \alpha_2}, \dots, \beta_N e^{j \alpha_{N}})^\T \in \mathcal{C}^{N \times 1}$ represents the reflection coefficient,
$\alpha_{n} \in (0,2\pi)$ denoting the phase shift and $\beta_n$ denoting the amplitude response of $n$-th reflecting element of the RIS.

Collect all the unknown parameters into a vector $\bm{\xi}=[\theta_1, \dots, \theta_K]^\T$, the probability density function of the received signal of $R$ sensors can be expressed as: 
\begin{align*}
    f(\textbf{y};\bm{\xi}) = \frac{1}{\pi^{RT} \det(\bm{C})^\T} e^{-\sum_{t=1}^T[\textbf{y}(t)-\bm{\mu}(t)]^H \bm{C}^{-1} [\textbf{y}(t)-\bm{\mu}(t)]},
\end{align*}
where the mean and covariance matrix are respectively
\begin{align*}
    \bm{\mu}(t) &= \textbf{A}_{\Phi} \bm{\Omega}(t) \textbf{A}_{\Theta} \textbf{x}(t),  \\
    \bm{C} &= \sigma^2 \textit{\textbf{I}}_R .
\end{align*} 

% Two different RIS configuration options are available for collecting received signals. 
% One is that the RIS changes different response matrices at different time slots, and the other is that the RIS adopts the same response matrix during $T$ time slots.

To obtain the CRLB we first derive the Fisher information matrix (FIM) $\textbf{F}$ by calculating the partial derivative of the probability density function with respect to $\bm{\xi}$. $\textbf{F}$ has the following expression:
\begin{align} \label{eq:Fisher eq}
    \textbf{F} &= \mathbb{E} \Big\{ \frac{\partial \ln f^H(\textbf{y};\bm{\xi})}{\partial \bm{\xi}} \frac{\partial \ln f(\textbf{y};\bm{\xi})}{\partial \bm{\xi}} \Big\}  \notag \\    
    &= \frac{2}{\sigma^2}\sum_{t=1}^T \Re \Big\{ \textbf{X}^H(t) \textbf{D}^H \textbf{D} \textbf{X}(t) \Big\} \notag \\
    & = \frac{2T }{\sigma^2} \Re \Big\{ (\textbf{D}^H \textbf{D}) \odot \textbf{R}^T \Big\} ,
\end{align}
where $\textbf{X}(t) = \diag(\textbf{x}(t)) $ and $\textbf{D} = \textbf{A}_{\Phi} \bm{\Omega} \dot{\textbf{A}}_{\Theta}$.
$ \dot{\textbf{A}_{\Theta}} = [\dot{\textbf{a}}(\theta_1),\dot{\textbf{a}}_a(\theta_2),\dots,\dot{\textbf{a}}_a(\theta_K)]$, 
where $\dot{\textbf{a}}(\theta_k) = \frac{\partial \textbf{a}(\theta_k)}{\partial \theta_k}$ denotes the derivative of steering vector $\textbf{a}(\theta_k)$ with respect to $\theta_k$.
% Since the response matrix of RIS changes at each time slot, $\textbf{F}$ contains the randomness brought by the RIS.
% $\mathbb{E}_{\bm{\Omega}}(\cdot)$ denotes the expectation with respect to the random variable $\bm{\Omega}$.
The last equation is derived under the assumption that the incident signal has finite power. 
Here, $\textbf{R}$ represents the signal covariance matrix, with its $(m,n)$-th element defined as $\textbf{R}_{mn}=\frac{1}{T} \sum_{t=0}^{T} x_n^H(t)x_m(t) $.
The CRB of $k$-th parameter is given as $\text{CRB}_{k,k}=[\textbf{F}^{-1}]_{k,k}$.

%%%%% ——---------------------------------------------------------------
\section{CRLB: Large System Analysis}
% We consider two asymptotic scenarios.
% First, we explore the case where the number of RIS elements is significantly large. 
% Subsequently, we investigate the scenario where both the number of sensors and targets is also large.
\subsection{Large RIS}
In this section, we consider the large RIS with the number of reflecting elements $N$ tending to infinity, under the following assumptions.
\begin{assumption}\label{ass:iid}
The reflection coefficient of each element of the RIS is independently and identically distributed, following a random distribution with finite-order moments.
\end{assumption}

In practice, optimizing the reflection coefficient matrix of the RIS becomes highly complicated or even infeasible, particularly when the RIS dimension is very high. 
A more practical approach is to leverage random configurations and optimize the probability distribution of the RIS. 
Additionally, when the user's position is unknown, it is not advisable to focus the beam on a specific point; instead, the scattering capabilities of the RIS should be exploited.
In \Cref{ass:iid}, the RIS is similar to a large scatterer in the environment, leading to fading and phase shift along the path.

The Fisher information matrix in eq.~\eqref{eq:Fisher eq} can be written as
\begin{align}   \label{eq:large RIS}
    \textbf{F} &= \frac{2T }{\sigma^2} \Re \Big\{ (\textbf{D}^H \textbf{D}) \odot \textbf{R}^T \Big\} \notag \\
    &= \frac{2T }{\sigma^2} (\dot{\textbf{A}}_{\Theta}^H ((\textbf{A}_{\Phi}^H \textbf{A}_{\Phi} )\odot \bm{\Sigma})\dot{\textbf{A}}_{\Theta}) \odot \textbf{R}^T,
\end{align}
where $\bm{\Sigma}=\mathbb{E}(\bm{\omega} \bm{\omega}^H)$ represent the auto-correlation matrix of random vector $\bm{\omega}$.
Under the \Cref{ass:iid}, we denote the diagonal elements of $\bm{\Sigma}$ as $E_1 = \mathbb{E}(\bm{\omega}_i \bm{\omega}_i^{*})$
and all off-diagonal elements as $E_2=\mathbb{E}(\bm{\omega}_i) \mathbb{E}( \bm{\omega}_i^{*})$.
Then we have the following proposition.
\begin{proposition}
\label{asymptotic CRB}
For sufficiently large $N$, the $(i,i)$-th element of the Fisher matrix is given as
\begin{align}
\label{asymp Fisher}
    \textbf{F}_{ii} =
\begin{cases}
\frac{p_i}{\sigma^2} \frac{T E_2}{2} N^4& \text{ $ \exists \phi_r, \theta_i+\phi_r=0 $ } \\
\frac{p_i}{\sigma^2} \frac{2T(E_1-E_2)R}{3} N^3& \text{ $ \forall \phi_r,  \theta_i+\phi_r \neq 0 $ }
\end{cases}.
\end{align}
\end{proposition}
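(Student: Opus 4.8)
The plan is to reduce the diagonal entry $\textbf{F}_{ii}$ to a single scalar quadratic form and then read off its leading power of $N$. First I would note that the Hadamard product with $\textbf{R}^T$ contributes only the factor $[\textbf{R}^T]_{ii}=\textbf{R}_{ii}=:p_i$ on the diagonal, so that
\begin{align*}
\textbf{F}_{ii} = \frac{2Tp_i}{\sigma^2}\,\dot{\textbf{a}}(\theta_i)^H \textbf{G}\, \dot{\textbf{a}}(\theta_i),\qquad \textbf{G}=(\textbf{A}_{\Phi}^H\textbf{A}_{\Phi})\odot\bm{\Sigma}.
\end{align*}
Writing the $m$-th entry of $\dot{\textbf{a}}(\theta_i)$ as $j(m-1)e^{j(m-1)\theta_i}$ and $[\textbf{A}_{\Phi}^H\textbf{A}_{\Phi}]_{mn}=\sum_{r=1}^R e^{j(n-m)\phi_r}$, the quadratic form becomes a double sum over $(m,n)$ of $(m-1)(n-1)e^{j(n-m)\theta_i}\,\textbf{G}_{mn}$.

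Next I would exploit the two-valued structure of $\bm{\Sigma}$ under \Cref{ass:iid}, namely $\bm{\Sigma}_{mm}=E_1$ and $\bm{\Sigma}_{mn}=E_2$ for $m\neq n$, and split the double sum into its diagonal and off-diagonal parts. Adding and subtracting the diagonal lets me express everything through the single auxiliary function $g(\psi)=\sum_{k=0}^{N-1}k\,e^{jk\psi}$ with $\psi_r:=\theta_i+\phi_r$, via the factorization $\sum_{m,n}(m-1)(n-1)e^{j(n-m)\psi_r}=|g(\psi_r)|^2$. Collecting terms then yields the clean decomposition
\begin{align*}
\dot{\textbf{a}}(\theta_i)^H \textbf{G}\, \dot{\textbf{a}}(\theta_i)
= R(E_1-E_2)\sum_{k=0}^{N-1}k^2 \;+\; E_2\sum_{r=1}^R |g(\psi_r)|^2,
\end{align*}
which separates an ``incoherent'' gain (first term) from a ``coherent'' beamforming gain (second term).

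The heart of the argument is the asymptotic dichotomy for $|g(\psi_r)|^2$. When $\psi_r=0$, i.e. $\theta_i+\phi_r=0$ for some sensor, $g(0)=\sum_k k\sim N^2/2$, giving the resonant term $|g(0)|^2\sim N^4/4$; I would also argue that for generic distinct angles at most one such $r$ exists, so this single term dominates the sum over $r$. When $\psi_r\neq 0$, I would use the closed form of $\sum_k k z^k$ with $z=e^{j\psi_r}\neq 1$ to obtain $|g(\psi_r)|^2=O(N^2)$, since the factor $1/|1-z|^2$ stays bounded. Combined with $\sum_{k=0}^{N-1}k^2\sim N^3/3$, the two regimes follow: in the aligned case the $E_2N^4$ term swamps the $N^3$ term and gives $\textbf{F}_{ii}\sim \frac{p_i}{\sigma^2}\frac{TE_2}{2}N^4$; in the non-aligned case every $|g(\psi_r)|^2$ is only $O(N^2)$, so the first term dominates and gives $\textbf{F}_{ii}\sim\frac{p_i}{\sigma^2}\frac{2T(E_1-E_2)R}{3}N^3$.

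I expect the main obstacle to be making the non-resonant estimate $|g(\psi_r)|^2=O(N^2)$ rigorous and uniform, and in particular controlling the boundary between the $N^3$ and $N^4$ regimes near alignment: one must confirm that no accumulation of moderately large $|g(\psi_r)|^2$ terms (when some $\psi_r$ is small but nonzero, or when several sensors nearly align) can alter the stated leading order. A secondary point to verify is the nonnegativity $E_1-E_2=\mathrm{Var}(\omega_i)\ge 0$, which renders the $N^3$ coefficient meaningful, together with the genericity assumption that exactly one $\phi_r$ can satisfy $\theta_i+\phi_r=0$.
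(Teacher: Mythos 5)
Your proposal is correct and follows essentially the same route as the paper: both split the quadratic form according to the diagonal ($E_1$) and off-diagonal ($E_2$) entries of $\bm{\Sigma}$, reduce the off-diagonal part to the sum $\sum_{n_1,n_2} n_1 n_2 e^{j(n_1-n_2)\psi_r}$, and distinguish the resonant case $\psi_r=0$ (giving $N^4/4$) from the non-resonant case (the paper's Remark, which is exactly your $|g(\psi_r)|^2 = O(N^2)$ bound with the $1/(4\sin^2(\psi_r/2))$ constant). Your factorization of the double sum as $|g(\psi_r)|^2$ and your explicit caveats (uniformity away from alignment, at most one resonant sensor, $E_1-E_2=\mathrm{Var}(\omega_i)\ge 0$) are a cleaner packaging of the same argument, and the constants you obtain match \eqref{asymp Fisher}.
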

Here, $p_i$ represents the signal power of $i$-th target and we denote $\rho_i=\frac{p_i}{\sigma^2}$ as the SNR.

\begin{proof}
\begin{small}
\begin{align}\label{eq:DH D}
   \hspace{-5mm}
    &[\textbf{D}^H\textbf{D}]_{ii} = \tr(\dot{\textbf{a}}(\theta_i)\dot{\textbf{a}}(\theta_i)^H \bm{\Omega}^H \textbf{A}_{\Phi}^H \textbf{A}_{\Phi} \bm{\Omega})),\\
    &= \sum_{r=1}^R \sum_{n_1=0}^{N-1} \sum_{n_2=0}^{N-1} \bm{\omega}_{n_2} \bm{\omega}_{n_1}^{*} n_1 n_2 e^{j(\theta_i+\phi_r)(n_1-n_2)} \notag ,\\
    &= \underbrace{E_2 \sum_{r=1}^R \sum_{n_1=0}^{N-1} \sum_{\substack{n_2=0\\n_2 \neq n_1}}^{N-1} n_1 n_2 \cos(\theta_i+\phi_r)(n_2-n_1)}_{n_1 \neq n_2} +\underbrace{E_1 \sum_{r=1}^R \sum_{n=1}^{N-1} n^2 }_{n_1=n_2} .\notag
\end{align}
\end{small}
Before further proof, we draw the following conclusions.
\begin{remark}
\begin{small}
    \begin{align}\label{re:1}
    \sum_{n_1=1}^{N-1} \sum_{n_2=1}^{N-1} n_1 n_2 \cos((n_2-n_1)\psi) \stackrel{N \to \infty}{\longrightarrow} \frac{1}{4(\sin{\frac{\psi}{2}})^2}\mathcal{O}(N^2)
\end{align}
\end{small}
\end{remark}
\begin{proof}
The proof primarily involves taking the derivatives of $\sum_{n=1}^N \sin(n\psi)$ and $\sum_{n=1}^N \cos(n\psi)$and subsequently substituting them back into the left side of \eqref{re:1}.
\end{proof}

When $\theta_i+\phi_r=0$, denote $\theta_i+\phi_r= \psi$, then
\begin{small}
    \begin{align*} 
\begin{split}
       [\textbf{D}^H\textbf{D}]_{ii} &= \frac{RE_1}{3}\mathcal{O}(N^3)+\frac{E_2}{4}\mathcal{O}(N^4)+\sum_{\substack{r=1\\\theta_i+\phi_r \neq 0}}^R \frac{E_2}{4(\sin{\frac{\psi_r}{2}})^2}\mathcal{O}(N^3)  \\
       &= \frac{E_2}{4}\mathcal{O}(N^4) ,
\end{split}
\end{align*}
\end{small}
where the term $\frac{E_2}{4}\mathcal{O}(N^4)$ arises from the condition $\theta_i+\phi_r=0$.

When $\theta_i+\phi_r \neq 0$,  then
\begin{small}
\begin{align*} 
\begin{split}
       [\textbf{D}^H\textbf{D}]_{ii}&= \frac{RE_1}{3}\mathcal{O}(N^3)+\sum_{\substack{r=1\\\theta_i+\phi_r \neq 0}}^R \frac{E_2}{4(\sin{\frac{\psi_r}{2}})^2}\mathcal{O}(N^2) -\frac{RE_2}{3}\mathcal{O}(N^3) \\
       &= \frac{(E_1-E_2)R}{3}\mathcal{O}(N^3),
\end{split}
\end{align*} 
\end{small}

Proposition \ref{asymptotic CRB} is then proved by substituting the aforementioned equations into the formulation of eq. \eqref{eq:large RIS}.
\end{proof}

\begin{remark}
Proposition \ref{asymptotic CRB} provides an analytical formula detailing the relationship between the Fisher matrix and parameters, including the RIS dimension, the statistical distribution of RIS phases, the number of sensors, SNR, etc. 
Notably, when the signals are uncorrelated, we obtain a simplified expression for the CRB.
\begin{align} \label{eq:asypCRB}
    \text{CRB}_{i,i} =
\begin{cases}
\frac{2}{\rho_i T E_2} N^{-4}& \text{ $ \exists \phi_r, \theta_i+\phi_r=0 $ } \\
\frac{3}{2T\rho_i(E_1-E_2)R}N^{-3}& \text{ $ \forall \phi_r,  \theta_i+\phi_r \neq 0 $ }
\end{cases}.
\end{align}

It demonstrates that when the positions of the target and the sensor are symmetrical, i.e. $\theta_i+\phi_r=0$, the CRB decreases with the fourth power of the dimension of the RIS.
In cases where $\theta_i+\phi_r \neq 0$ for all $\theta$ and $\phi$, the CRB decreases in the third-order of RIS dimension.
One possible reason is that when the transmitter and receiver are symmetrical, the signal strength at the receiver is strongest under the random RIS setting, which contributes to improved perception performance.
An interesting observation is that the CRB is independent of the specific values of $\bm{\theta}$ and $\bm{\phi} $.
\end{remark}

\begin{proposition}\label{order relationship}
The CRB matrix satisfies the following order relationship:
\begin{align*}
    \text{CRLB}(T) & \succeq \text{CRLB}(T+1)  \\
    \text{CRLB}(N) & \succeq \text{CRLB}(N+1)
\end{align*}
\end{proposition}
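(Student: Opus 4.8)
The plan is to reduce both order relations to a single monotonicity statement about the Fisher information matrix, using the fact that matrix inversion reverses the Loewner order: if $\mathbf{0}\prec\mathbf{F}_1\preceq\mathbf{F}_2$ then $\mathbf{F}_2^{-1}\preceq\mathbf{F}_1^{-1}$. Since $\mathrm{CRLB}=\mathbf{F}^{-1}$, it suffices to show that $\mathbf{F}(T)\preceq\mathbf{F}(T+1)$ and $\mathbf{F}(N)\preceq\mathbf{F}(N+1)$, i.e. that the FIM is nondecreasing in the number of snapshots and in the RIS dimension. First I would record the elementary fact that the real part of a Hermitian positive semidefinite matrix is (real symmetric) positive semidefinite, since $\mathbf{v}^\T\Re\{\mathbf{M}\}\mathbf{v}=\mathbf{v}^H\mathbf{M}\mathbf{v}\ge 0$ for every real $\mathbf{v}$; this is the workhorse that turns Gram-type terms into PSD contributions. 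I would also use the Schur product theorem to note that Hadamard multiplication by the fixed PSD matrix $\mathbf{R}^\T$ preserves the Loewner order, so the trailing $\odot\,\mathbf{R}^\T$ and $\Re\{\cdot\}$ operations never destroy monotonicity of the kernel.

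Snapshot monotonicity is the easy half. Working from the summation form of the FIM in \eqref{eq:Fisher eq}, I write $\mathbf{F}(T)=\tfrac{2}{\sigma^2}\sum_{t=1}^{T}\Re\{\mathbf{X}^H(t)\mathbf{D}^H\mathbf{D}\mathbf{X}(t)\}$. Each summand equals $\Re\{(\mathbf{D}\mathbf{X}(t))^H(\mathbf{D}\mathbf{X}(t))\}$, the real part of a Gram matrix, hence PSD by the fact above. Therefore $\mathbf{F}(T+1)=\mathbf{F}(T)+\tfrac{2}{\sigma^2}\Re\{\mathbf{X}^H(T+1)\mathbf{D}^H\mathbf{D}\mathbf{X}(T+1)\}\succeq\mathbf{F}(T)$, and the order relation for $T$ follows immediately from order-reversal of the inverse.

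The RIS-dimension monotonicity is the crux and needs more care, because increasing $N$ does not append an independent measurement: the extra element is folded into the same $R$-dimensional observation through the combined channel, so $\mathbf{D}^{(N+1)}=\mathbf{D}^{(N)}+\mathbf{d}_N$ with $\mathbf{d}_N$ rank one, and $\mathbf{D}^{(N+1),H}\mathbf{D}^{(N+1)}-\mathbf{D}^{(N),H}\mathbf{D}^{(N)}$ carries indefinite cross terms. My plan is to work with the expected FIM form \eqref{eq:large RIS} and exploit \Cref{ass:iid} to write the kernel as $(\mathbf{A}_{\Phi}^H\mathbf{A}_{\Phi})\odot\bm{\Sigma}=(E_1-E_2)R\,\textit{\textbf{I}}_N+E_2\,\mathbf{A}_{\Phi}^H\mathbf{A}_{\Phi}$, using that the diagonal of $\mathbf{A}_{\Phi}^H\mathbf{A}_{\Phi}$ equals $R\,\textit{\textbf{I}}_N$ and that $E_1-E_2=\mathrm{Var}(\bm{\omega}_i)\ge 0$ while $E_2\ge 0$. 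This splits the kernel into two PSD pieces. The incoherent piece $(E_1-E_2)R\,\dot{\mathbf{A}}_{\Theta}^H\dot{\mathbf{A}}_{\Theta}=(E_1-E_2)R\sum_{n=0}^{N-1}\mathbf{r}_n^{*}\mathbf{r}_n^\T$, with $\mathbf{r}_n$ the $n$-th row of $\dot{\mathbf{A}}_{\Theta}$, is a genuine sum over RIS elements of rank-one PSD terms, so it is exactly nondecreasing in $N$ in the Loewner order; the coherent piece $E_2(\mathbf{A}_{\Phi}\dot{\mathbf{A}}_{\Theta})^H(\mathbf{A}_{\Phi}\dot{\mathbf{A}}_{\Theta})$ is PSD but its $N$-increment again contains the indefinite cross terms.

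The main obstacle is therefore controlling this coherent piece, and I expect to dispatch it in the two regimes mirroring \Cref{asymptotic CRB}: in a resonant direction ($\theta_i+\phi_r=0$) the coherent sum adds in phase and grows like $\mathcal{O}(N^4)$, so it is itself increasing and dominant, whereas in the non-resonant case the coherent entries are only $\mathcal{O}(N)$, making the coherent kernel $\mathcal{O}(N^2)$, which is asymptotically dominated by the monotone incoherent $\Theta(N^3)$ term. Consequently, after the Hadamard product with $\mathbf{R}^\T$ and taking the real part, the kernel $\dot{\mathbf{A}}_{\Theta}^H((\mathbf{A}_{\Phi}^H\mathbf{A}_{\Phi})\odot\bm{\Sigma})\dot{\mathbf{A}}_{\Theta}$ is nondecreasing in $N$ for large $N$, yielding $\mathbf{F}(N)\preceq\mathbf{F}(N+1)$ and hence $\mathrm{CRLB}(N)\succeq\mathrm{CRLB}(N+1)$. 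I expect the cleanest fully rigorous statement to hold in the uncorrelated-signal case, where $\mathbf{R}$ is diagonal, $\mathbf{F}$ is diagonal, and the matrix order collapses to scalar monotonicity of each $\mathbf{F}_{ii}$ — precisely the quantity whose leading-order $N^3$ and $N^4$ growth is already established in \Cref{asymptotic CRB}.
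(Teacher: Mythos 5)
Your $T$ half is correct and complete: each added snapshot contributes $\tfrac{2}{\sigma^2}\Re\{(\mathbf{D}\mathbf{X}(T+1))^H(\mathbf{D}\mathbf{X}(T+1))\}\succeq 0$, and inversion reverses the Loewner order. For what it is worth, the paper itself gives no argument beyond a pointer to Appendix~F of \cite{stoica1989music}, whose mechanism is ``enlarging the data appends rows, hence adds a PSD Gram term''; that mechanism transfers verbatim to $T$ but, as you correctly observe, \emph{not} to $N$, since a new RIS element enters $\mathbf{D}=\mathbf{A}_{\Phi}\bm{\Omega}\dot{\mathbf{A}}_{\Theta}$ as a rank-one additive perturbation of a matrix of fixed size $R\times K$, so $\mathbf{D}^H\mathbf{D}$ picks up indefinite cross terms. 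Identifying that, and the decomposition $(\mathbf{A}_{\Phi}^H\mathbf{A}_{\Phi})\odot\bm{\Sigma}=(E_1-E_2)R\,\mathbf{I}_N+E_2\,\mathbf{A}_{\Phi}^H\mathbf{A}_{\Phi}$, is the right starting point and already more than the paper offers.

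The gap is in how you dispatch the coherent piece. To get $\mathbf{F}(N)\preceq\mathbf{F}(N+1)$ you need the \emph{one-step increment} of the kernel to be PSD, but you argue that the coherent piece is $\mathcal{O}(N^2)$ in total while the incoherent piece is $\Theta(N^3)$ in total; dominance of totals does not imply dominance of increments, since an $\mathcal{O}(N^2)$ sequence can have increments of order $N^2$ with oscillating sign, comparable to the $(E_1-E_2)RN^2$ increment of the incoherent term. The step can be repaired, but only by computing the increment itself: writing $S_N=\sum_{n=0}^{N-1}ne^{jn\psi_r}$, one has $|S_{N+1}|^2-|S_N|^2=2\Re(S_N^{*}Ne^{jN\psi_r})+N^2$, and the leading $-N^2$ of the cross term exactly cancels the $+N^2$ Gram term in the non-resonant case, leaving an $\mathcal{O}(N)$ increment that is genuinely dominated. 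You do not carry out this cancellation, and without it the claimed dominance is unsupported. Even with that repair, your argument establishes the conclusion only for sufficiently large $N$ and only entrywise on the diagonal (the uncorrelated case where $\mathbf{F}$ is diagonal), which you concede; the proposition as stated asserts the full Loewner order for every $N$ and general $\mathbf{R}$, and in the correlated case with $E_2>0$ the bordering $\mathbf{M}_{N+1}-\diag(\mathbf{M}_N,0)$ of the kernel is indefinite, so a finite-$N$ matrix statement would require an additional idea (or added hypotheses) beyond what you propose.
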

\begin{proof}
The proof follows a methodology akin to Appendix F in \cite{stoica1989music}.
\end{proof}
\Cref{order relationship} shows that CRB decreases with the increase of RIS dimension and the number of time slots.

In conclusion, in this section, we have obtained the scaling law of CRB and order relations with the system parameters.
The asymptotic CRB formulas are much easier to evaluate than the exact CRB expression in finite cases, and provide a good approximation to the exact CRB in large RIS systems.

\subsection{Large system and large RIS}
In this section, we explore the asymptotic properties of CRB when the number of RIS elements, the number of targets, and the sensors are significantly large. 
We begin by setting the following assumptions.
\begin{assumption}\label{ass:widelydoa}
The DoAs $\theta_1,\cdots,\theta_K, \phi_1,\cdots,\phi_R$ are fixed as $N \to \infty$ and widely spaced, which corresponds to the case where the DoAs have angular separation much larger than a beam-width $2\pi/N$.
\end{assumption}
\begin{assumption}\label{ass:large array}
As $R, K, N \to \infty$, we have $K/N \to c_1 \in (0,\infty)$ and $R/N \to c_2 \in (0,\infty)$.
\end{assumption}
%  $R$, $K$ and N both tend to infinity while their ratio keeps constant, i.e., $R, K, N \to \infty$, with $\frac{R}{N} \to c_1$ and $\frac{K}{N} \to c_2$.
\begin{assumption}\label{ass:uncorrelated}
Correlated signals with equal power: the signal covariance matrix $\mathbf{R}$ is an all-one matrix.
\end{assumption}

In this scenario, we relax the constraints imposed by \Cref{ass:iid}.
The array elements of RIS may follow a certain joint distribution or be independent.
However, due to the interdependence introduced by the expectation in \eqref{eq:DH D}, the validity of \eqref{asymp Fisher} no longer holds true. 
The complexity of this model is attributed to the intricate structure arising from the multiplication of high-dimensional matrices.
An efficient technology is the random matrix theory (RMT). 

We draw inspiration from the asymptotic statistics of the channel capacity of Multiple-Input Multiple-Output (MIMO) systems. 
For a given channel $\textbf{H}$, the channel capacity in the absence of channel state information is:
\begin{align*}
    C = \log \left( \det(\textbf{I}_{N_R}+\frac{\text{SNR}}{N_T}\textbf{H}\textbf{H}^H)  \right),
\end{align*}
where $\frac{\text{SNR}}{N_T}$ represents the SNR per transmitter and $N_T$ represents the number of transmitting antennas.
The channel capacity depends on the eigenvalues $\lambda_k \ge 0$ of the $N_R \times N_R$ matrix $\frac{1}{N_T}\textbf{H}\textbf{H}^H$ as:
\begin{align*}
    C &= \sum_{k=1}^{N_R} \log(1+\text{SNR}\lambda_k),\\
    &= N_R \int_0^{\infty} \log(1+\text{SNR}x)f(x)\, dx,
\end{align*}
where eigenvalue probability density function (pdf) $f(x)=\frac{1}{N_R}\sum_{k=1}^{N_R}\delta(x-\lambda_k)$.
Hence, when the pdf tends almost surely to a deterministic function $f^{\infty}(x)$, also known as the asymptotic eigenvalue probability density function (AEPDF), the asymptotic capacity can be computed in closed form by integrating \cite{749007}.

Inspired by the asymptotic capacity, we propose an analysis framework for asymptotic CRB.
First, we show how to establish the relationship between CRB and the RMT.
The CRB is expressed as follows.
\begin{align} 
    \text{CRB} &= \frac{\sigma^2 }{2TN^4}\tr((\frac{1}{N^4}\textbf{D}^H \textbf{D})^{-1})   ,\\
    &= \frac{\sigma^2 }{2TN^4} \int \frac{1}{\lambda} d \mu_{C}(\lambda),  \label{eq:crb}
\end{align}
where $\lambda_i$ are the eigenvalues of matrix $\frac{1}{N^4} \textbf{D}^H \textbf{D}$, and $\mu_C(\lambda)$ is the AEPDF of $\frac{1}{N^4} \textbf{D}^H \textbf{D}$ defined as $\mu_C(\lambda)=\frac{1}{K}\sum_{i=1}^{K}\delta(\lambda-\lambda_i)$, also referred to as the limiting eigenvalue distribution.
It can be observed that the value of CRB depends on the integral of the AEPDF.

A key tool in RMT to analyze the asymptotic eigenvalue pdf is the Stieltjes transform, namely, $m(z)=\int \frac{1}{\lambda-z}d\mu(\lambda)$, and its inverse transform, denoted as $\mu(z)=\frac{1}{\pi}\lim_{y \to 0^{+}} \mathcal{I}[m(x+jy)]$, where $ \mathcal{I}(\cdot)$ represents the imaginary part of the random variable involved.
However, it is challenging to directly derive the Stieltjes transform of $\frac{1}{N^4} \textbf{D}^H \textbf{D}$ due to the matrix multiplication structure.
Free probability theory states that matrices $\textbf{A}_{\Phi},\bm{\Omega}$ and $ \dot{\textbf{A}}_{\Theta}$ are asymptotically free \cite{tulino2004random}, and thus the AEPDF of $\frac{1}{N^4} \textbf{D}^H \textbf{D}$ is the free multiplicative convolution \cite{6966004} of the AEPDF of three Hermittian matrices: $ \frac{1}{N^3}\dot{\textbf{A}}_{\Theta}^H\dot{\textbf{A}}_{\Theta}$, $\frac{1}{N}\textbf{A}_{\Phi} \textbf{A}_{\Phi}^H$ and $\bm{\Omega} \bm{\Omega}^H$.

We first consider the $\mathcal{N}$-transform of $\mu$ given by \( N_{\mu}(z)=\psi_{\mu}^{-1}(z) \), where $\psi_{\mu}(z)$ is closely related to the Stieltjes transform by $\psi_{\mu}(z)=\int \frac{-\lambda}{\lambda-z} d \mu(\lambda) = -1-z m_{\mu}(z)$.
The $\mathcal{N}$-transform has the following two properties \cite{burda2010applying}.
\begin{theorem}
\label{property of N-transform}
    For an $N \times T$ matrix $A$ and a $T \times N$ matrix $B$, assume that $A$ and $B$ are two independent non-commutative random variables and $AB$ is Hermitian than the N-transform of $AB$ obeys the following ‘non-commutative multiplication law’ and 'cyclic property':
    \begin{align*}
        \mathcal{N}_{AB}(z) &= \frac{z}{1+z} \mathcal{N}_{A}(z) \mathcal{N}_{B}(z) , \\
        \mathcal{N}_{AB}(z) &= \mathcal{N}_{BA}(\frac{N}{T}z).
    \end{align*}
\end{theorem}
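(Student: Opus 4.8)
The plan is to derive both identities from the relationship between the $\mathcal{N}$-transform and more standard objects of free probability, rather than manipulating $\psi_\mu$ directly. The key observation is that the $\mathcal{N}$-transform is, up to a simple rational prefactor, the reciprocal of Voiculescu's $S$-transform, which is \emph{multiplicative} under free multiplicative convolution. So for the first identity I would import the multiplicativity of the $S$-transform as the deep analytic input (citable from \cite{tulino2004random}) and reduce the claim to an algebraic bookkeeping exercise. The second identity, being purely a statement about how the spectrum of $AB$ relates to that of $BA$ for rectangular factors, I would prove directly from the cyclicity of the trace.

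For the multiplication law, first I would translate between the paper's convention $\psi_\mu(z)=\int \frac{\lambda}{z-\lambda}\,d\mu(\lambda)=\sum_{k\ge 1} m_k z^{-k}$ and the power series $\psi_\mu^{V}(y)=\sum_{k\ge1}m_k y^{k}=\psi_\mu(1/y)$ underlying the $S$-transform. Inverting $w=\psi_\mu(z)=\psi_\mu^{V}(1/z)$ shows $\mathcal{N}_\mu(w)=\psi_\mu^{-1}(w)=1/\chi_\mu(w)$, where $\chi_\mu=(\psi_\mu^{V})^{-1}$, so the defining relation $S_\mu(w)=\frac{1+w}{w}\chi_\mu(w)$ becomes
\begin{align*}
\mathcal{N}_\mu(z)=\frac{1+z}{z\,S_\mu(z)}.
\end{align*}
Substituting $S_{AB}=S_A S_B$ and eliminating each $S$ via $S_\bullet(z)=\frac{1+z}{z\,\mathcal{N}_\bullet(z)}$, the net power of $\frac{1+z}{z}$ rearranges to give exactly $\mathcal{N}_{AB}(z)=\frac{z}{1+z}\mathcal{N}_A(z)\mathcal{N}_B(z)$.

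For the cyclic property, I would use that $A$ is $N\times T$ and $B$ is $T\times N$, so $\tr\big((AB)^k\big)=\tr\big((BA)^k\big)$ for every $k\ge1$. With the normalized moments $m_k(AB)=\frac1N\tr\big((AB)^k\big)$ and $m_k(BA)=\frac1T\tr\big((BA)^k\big)$, this gives $m_k(AB)=\frac{T}{N}m_k(BA)$ and hence $\psi_{AB}(z)=\frac{T}{N}\psi_{BA}(z)$ termwise. Writing $w=\psi_{AB}(z)$, so that $\psi_{BA}(z)=\frac{N}{T}w$ and therefore $z=\mathcal{N}_{BA}\!\big(\frac{N}{T}w\big)$, and comparing with $z=\mathcal{N}_{AB}(w)$, yields $\mathcal{N}_{AB}(z)=\mathcal{N}_{BA}(\frac{N}{T}z)$. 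I expect the main obstacle to be the rigorous justification of the first identity in the rectangular, non-Hermitian setting: establishing asymptotic freeness and the validity of $S$-transform multiplicativity here really requires the rectangular free-probability framework together with a careful accounting of the $|N-T|$ zero eigenvalues that distinguish $\mu_{AB}$ from $\mu_{BA}$ (the very same zeros that produce the $N/T$ factor in the cyclic law). Packaging the stated hypotheses correctly — the independence/freeness of $A,B$ and the Hermitianity of $AB$ — is where the real care is needed, whereas the transform algebra above is routine.
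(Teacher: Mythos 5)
The paper does not actually prove this theorem: it is imported verbatim from \cite{burda2010applying} with only a citation, so there is no in-paper argument to compare against. Your derivation is correct and is, in essence, the standard one used in that reference: the identity $\mathcal{N}_\mu(z)=\frac{1+z}{z\,S_\mu(z)}$ (equivalently $\mathcal{N}_\mu=1/\chi_\mu$) reduces the multiplication law to the multiplicativity of Voiculescu's $S$-transform, and the cyclic law follows from $\tr\big((AB)^k\big)=\tr\big((BA)^k\big)$ together with the mismatch between the $1/N$ and $1/T$ normalizations of the moments; your bookkeeping in both steps checks out, including the direction of the $N/T$ rescaling. The one caveat worth flagging is that it concerns the statement more than your proof: as written, the multiplication law is asserted for an $N\times T$ matrix $A$ and a $T\times N$ matrix $B$, but $\mathcal{N}_A$ and $\mathcal{N}_B$ are only defined through eigenvalue distributions of square (indeed Hermitian-izable) objects, so the first identity really lives at the level of free square elements of a tracial algebra, with the second identity being the device that handles rectangularity --- exactly the division of labor you identify when you point to the rectangular free-probability framework and the $|N-T|$ zero eigenvalues as the place where rigor is needed. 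Your proposal therefore supplies a correct proof sketch for a result the paper leaves entirely to the literature.
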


For asymptotically free matrices, the limiting eigenvalue distribution of matrix product satisfies the free multiplicative convolution denoted $\boxtimes$, namely,
\begin{align*}
    \mu_{\textbf{D}^H \textbf{D}} = \mu_{\frac{1}{N^3}\dot{\textbf{A}}_{\Theta}^H\dot{\textbf{A}}_{\Theta}} \boxtimes \mu_{\bm{\Omega} \bm{\Omega}^H} \boxtimes \mu_{\frac{1}{N}\textbf{A}_{\Phi} \textbf{A}_{\Phi}^H}.
\end{align*}
The $\mathcal{N}$-transform of $\frac{1}{N^4}\textbf{D}^H \textbf{D}$ can be calculated using Theorem \ref{property of N-transform} as follows:
\begin{align} 
    &\mathcal{N}_{\frac{1}{N^4}\textbf{D}^H \textbf{D}}(z) = \mathcal{N}_{\frac{1}{N^4}\dot{\textbf{A}}_{\Theta} \dot{\textbf{A}}_{\Theta}^H \bm{\Omega}^H \textbf{A}_{\Phi}^H \textbf{A}_{\Phi} \bm{\Omega}}(c_1 z),  \notag  \\  
    &= \frac{c_1 z}{1+c_1 z} \mathcal{N}_{\frac{1}{N^3}\dot{\textbf{A}}_{\Theta} \dot{\textbf{A}}_{\Theta}^H}(c_1 z) \mathcal{N}_{\frac{1}{N}\bm{\Omega} \bm{\Omega}^H \textbf{A}_{\Phi}^H \textbf{A}_{\Phi}}(c_1 z)  \label{eq:N-transform},\\
    &= (\frac{c_1 z}{1+c_1 z})^2 \mathcal{N}_{\frac{1}{N^3}\dot{\textbf{A}}_{\Theta}^H \dot{\textbf{A}}_{\Theta}}(z) \mathcal{N}_{\bm{\Omega} \bm{\Omega}^H}(c_1 z) \mathcal{N}_{\frac{1}{N}\textbf{A}_{\Phi} \textbf{A}_{\Phi}^H}(\frac{c_1}{c_2}z) . \notag
\end{align}

Then $\psi(z)$ of $\frac{1}{N^4}\textbf{D}^H \textbf{D}$ can be derived by solving the following integral equation:
\begin{align}
    \begin{split}
        c_1 \psi(z) = \int \frac{\lambda \quad d \mu_{\bm{\Omega} \bm{\Omega}^H}(\lambda)}{\frac{z}{\left(\frac{c_1\psi(z)}{1+c_1\psi(z)}\right)^2 \mathcal{N}_{\frac{1}{N^3}\dot{\textbf{A}}_{\Theta}^H \dot{\textbf{A}}_{\Theta}}\left(\psi (z)\right) \mathcal{N}_{\frac{1}{N}\textbf{A}_{\Phi} \textbf{A}_{\Phi}^H}\left(\frac{c_1}{c_2}\psi (z)\right)} - \lambda},\label{eq:psi}
    \end{split}
\end{align}
and the corresponding $m(z)$ by substituting $(-1-\psi(z))/z$.
Finally, we can determine the limiting eigenvalue distribution $\mu^{\infty}(\lambda)$ of $\frac{1}{N^4}\textbf{D}^H \textbf{D}$ through the inverse Stieltjes transform.
% \eqref{eq:N-transform} and \eqref{eq:psi} reveal the relationship between CRLB and system factors (angle, RIS phase, SNR). 
% Based on this, we can get the corresponding CRLB under various distributions of RIS, which provides guidance for the phase design of RIS in large-scale systems.
% As mentioned before, when the RIS dimension is large, the optimization of each element of the RIS is often a complex and huge task. 
% \eqref{eq:psi} provides a new idea for the optimization of the RIS distribution.
Then the asymptotic CRB can be obtained by substituting in \eqref{eq:crb} $\mu^{\infty}(\lambda)$ for $\mu_C(\lambda)$:
\begin{align*} 
    \text{CRB}^{\infty} =\frac{\sigma^2 }{2TN^4} \int \frac{1}{\lambda} d \mu^{\infty}(\lambda).
\end{align*}

In the following, we will discuss the impact of the RIS configuration matrix on CRB in different scenarios.

% 基于上述结论，我们得到了一些特殊情况下的corollaries
% \subsection{SIMO scenario}
% Taking a SIMO scenario with single user and multiple sensors as an example. 
% When $R$ is large, $\textbf{A}_{\Phi} \textbf{A}_{\Phi}^H = \textit{\textbf{I}}_R$.
% Since that there is only one user location to be estimated, then $\dot{\textbf{A}}_{\Theta} \dot{\textbf{A}}_{\Theta}^H$ is a rank one matrix with its eigenvalue $\lambda_a =  \dot{\textbf{A}}_{\Theta}^H \dot{\textbf{A}}_{\Theta}$.
% % SISO情况下，CRB根phi(z)的关系是什么？后面还得实验验证一下理论CRB跟实际的对比
% As mentioned before, the eigenvalue of $\textbf{D}^H \textbf{D}$ depends on $\dot{\textbf{A}}_{\Theta} \dot{\textbf{A}}_{\Theta}^H$, $\textbf{A}_{\Phi} \textbf{A}_{\Phi}^H$ and $\bm{\Omega} \bm{\Omega}^H$, so it is not difficult to find that $\textbf{D}^H \textbf{D}$ is the rank-one matrix with eigenvalue denoted as $\lambda_d$.
% CRLB can be obtained by calculating the following formula:
% \begin{align*}
%     \text{CRLB} &= \frac{\sigma^2 }{2T} \cdot \frac{1}{\lambda_d}
% \end{align*}
% where the value of $\lambda_d$ depends on $\lambda_a$ and eigenvalues of $\bm{\Omega} \bm{\Omega}^H$.
% In the following, we will discuss the impact of RIS response matrix on CRLB in different scenarios

\subsubsection{Constant modulus and random phase model}
\label{case one}
The most common model in current research is the constant modulus model, where the phase response of the RIS is characterized by two common cases: continuous and discrete. 
In the case of the discrete phase, it is assumed that the magnitude response of RIS is 1, and the phase is 2bit randomly discretized.
For example, the phase can only take pi/2 or -pi/2 by controlling the ON-OFF status of a PIN diode.
Or we consider the continuous case where the phase of the RIS follows a uniform distribution between $(0,2\pi)$, while maintaining a magnitude response of 1. 
This assumption is often applicable in RIS implementations that utilize varactors or CMOS, allowing for continuous adjustment of the phase.
In both cases, the $\mathcal{N}$-transform of $\bm{\Omega} \bm{\Omega}^H$ is $\mathcal{N}_{\bm{\Omega} \bm{\Omega}^H}(c_1 z) = \frac{1+c_1 z}{c_1 z}$.

Besides, note that under the \Cref{ass:widelydoa}, we have the following asymptotic conclusion:
\begin{align*}
    \textbf{A}_{\Phi} \textbf{A}_{\Phi}^H / N \to \textbf{I}_R, \quad R, N \to \infty \\
    \dot{\textbf{A}}_{\Theta}^H \dot{\textbf{A}}_{\Theta} / N^3 \to \textbf{I}_K / 3, \quad K, N \to \infty,
\end{align*}
which can be proved by making use of the following results \cite{stoica1989maximum}:
\begin{align*}
    \frac{1}{N^{k+1}}\sum_{t=1}^N t^ke^{jt(w_1-w_2)} \stackrel{N \to \infty} {\longrightarrow}
    \begin{cases}
        0&  w_1 \neq w_2 \\
        1/(k+1)&  w_1 = w_2 .
    \end{cases}
\end{align*}
Then we have:
\begin{align*}
    \mathcal{N}_{\frac{1}{N}\textbf{A}_{\Phi} \textbf{A}_{\Phi}^H} (\frac{c_1}{c_2}z) = \frac{c_1 z+c_2}{c_1 z} , \quad \mathcal{N}_{\frac{1}{N^3}\dot{\textbf{A}}_{\Theta}^H \dot{\textbf{A}}_{\Theta}}(z) = \frac{1+z}{3z} .
\end{align*}

Substitute into \eqref{eq:psi} we have the following equation
\begin{align*}
    z = \frac{[1+\psi(z)]\, [c_1 \psi(z)+c_2]}{3\, \psi(z)\, [1+c_1\psi(z)]},
\end{align*}
and the corresponding quadratic equation with respect to $m(z)$ by substituting $\psi_{\mu}(z)= -1-z m_{\mu}(z)$:
\begin{align} \label{eq:quad eq}
    a_2 m^2(z) + a_1 m(z) + a_0 = 0,
\end{align}
where $a_2=-c_1z^2+3c_1z^3, \, a_1 = (c_2-c_1)z+(6c_1-3)z^2,\, a_0=(3c_1-3)z$.
We can find the closed-form expression of $m(z)$ by solving the quadratic equation directly and then $\mu(z)$ can be obtained using the Stieltjes inverse transform.

% \subsubsection{Joint Gaussian distribution}
% 最后，我们考虑联合循环高斯分布，即RIS的幅度响应服从高斯分布，相位服从均匀分布。这种假设适用于主动RIS，其由于配置了（）而具有信号放大能力。（这一段可以引用一些主动RIS的论文）
% Although constant modulus constraints are assumed in most studies, this is too ideal in practical scenarios. 
% It was pointed out both in \cite{9319694} and \cite{9387559} that the magnitude and phase responses of RIS are interactive and depend on circuit model parameters. 
% Another commonly used model is active RIS which can amplify the incident signal by exploiting negative resistance components on each reflecting element~\cite{9377648}.
% In this case, we assume that the magnitude of RIS follows a centered joint Gaussian distribution with zero mean and covariance $\bm{\Sigma}_{N}$, and the phase follows a random uniform distribution.
% Then $\bm{\Omega}\bm{\Omega}^H$ follows 

\subsubsection{Discrete amplitude model}
Another commonly used model is the discrete amplitude model, where the amplitude is constrained to binary values $x$ and $y$ with a probability of $p$, that is to say:
\begin{align*}
    P(\beta_i = x ) = p, \quad P(\beta_i = y ) = 1- p.
\end{align*}
At the same time, the phase response obeys a uniform distribution between $(0,2\pi)$.
This assumption applies to the novel concept of simultaneously transmitting and reflecting RISs (STAR-RISs)\cite{mu2021simultaneously}, where only a subset of elements are selected for transmission and reflection. 
Therefore, when the sensor and target are on the same side of the RIS, only the elements operating in reflection mode play a major role in the incident signal.
Therefore, the coefficient $p$ also represents the proportion of reflective elements.
The eigenvalue pdf of the matrix $\bm{\Omega} \bm{\Omega}^H$ is given by:
\begin{align*}
    \mu_{\bm{\Omega} \bm{\Omega}^H}=\frac{Np\delta(\lambda-x^2)+N(1-p)\delta(\lambda-y^2)}{N},
\end{align*}
where $\delta(\cdot)$ represents the impulse function.
Substituting the above equation into \eqref{eq:psi}, and utilizing the relationship $\psi_{\mu}(z)= -1-z m_{\mu}(z)$, we can get the following 5th-degree polynomial equation in terms of $m(z)$:
\begin{align*}
     c_1+c_1zm(z) =& \frac{px^2}{\frac{(1-c_1-c_1zm(z))^2}{c_1m(z)(c_2-c_1-c_1zm(z))}-x^2} \\
     & + \frac{(1-p)y^2}{\frac{(1-c_1-c_1 z m(z))^2}{c_1m(z)(c_2-c_1-c_1zm(z))}-y^2} .
\end{align*}
Similarly, $\mu(z)$ can be solved numerically using the Stieltjes inverse transform.

In summary, in this section, we analyze the asymptotic CRB, which is contingent on the limiting eigenvalue distribution of the Fisher matrix. 
When the number of sensors and targets both tends to infinity, the CRB becomes independent of the specific values of DoAs but relies on the distribution of the RIS. 
For both constant and discrete amplitude models with random phase between $(0,2\pi)$, we demonstrate that the CRB converges to an asymptotic CRB that depends on the RIS dimension, the number of sensors, and the number of targets.

%%%% ------------------------------------------  -----
\section{numerical results}
\subsection{The impact of angles of receivers}
We consider five receivers to estimate three targets, and the response matrix of RIS changes in a uniform distribution manner, that is, $K=3$, $R=5$, $\bm{\Omega}_{ii}\sim U(0,\pi)$.
We consider the uncorrelated signal case and the correlation matrix $\textbf{R}$ is equal to $\textit{\textbf{I}}_K$.

We set the angle of $\theta_1$ to satisfy the condition that there exists $\phi_r$ such that $\theta_1+\phi_r=0$, and we set the angle of $\theta_2$ such that $\theta_2+\phi_r \neq 0$ for all $\phi_r$.
The values of theta and phi are randomly generated according to the specified requirements.
The theoretical values of CRB corresponding to DoA $\theta_1$ and $\theta_2$ are marked by red and blue circles respectively, in Fig.~\ref{fig1}.
The corresponding numerical results of CRB are represented by red lines and blue lines, respectively.

The results show that CRB decreases in the fourth-order of $N$ when there exists $\phi_r$ such that $\theta_1+\phi_r=0$, and decreases in the third-order of $N$ when $\theta_2+\phi_r \neq 0$ for all $\phi_r$, which is consistent with the theoretical expression in \eqref{eq:asypCRB}. 
An interesting finding is that mirroring the angles helps to improve localization performance.
\begin{figure}[h]  
  \centering
  \includegraphics[width=0.8\linewidth]{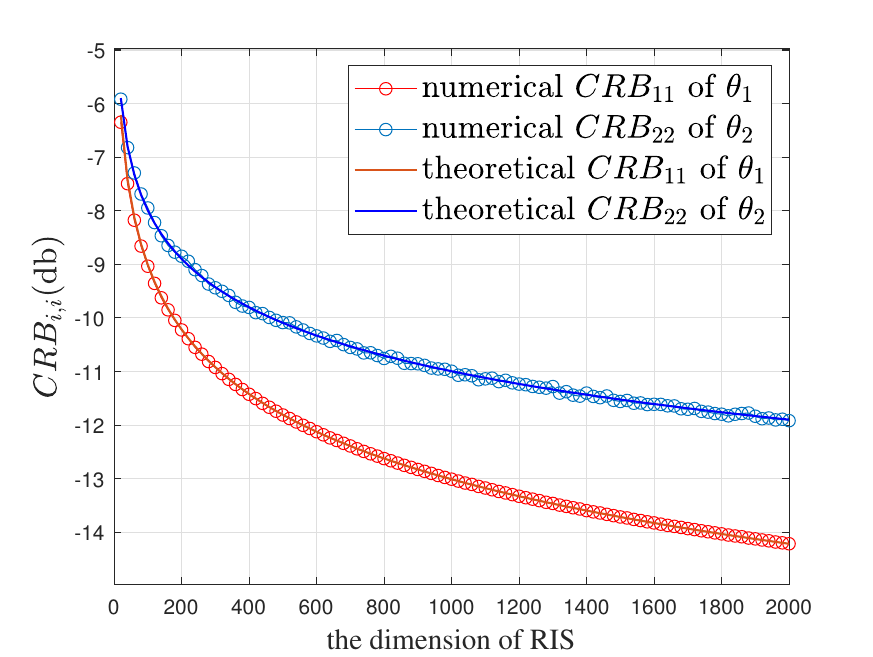}
    \caption{ The value of CRB for DoAs $\theta_1$ and $\theta_2$ versus the dimension of RIS. $\text{CRB}_{11}$ and $\text{CRB}_{22}$ represent the $(1,1)$ and $(2,2)$-th elements of the CRB matrix. The number of time slots is set as $T=50$ and the SNR is $0$. }  \label{fig1}
\end{figure}

\subsection{The impact of the number of receivers}
In this experiment, we compared the theoretical and numerical CRB under different numbers of sensors $R$.
Take $\text{CRB}_{11}$ as an example, as illustrated in Fig.~\ref{fig2}, it is observed that the CRB decreases with an increase in the number of sensors. 
Furthermore, the asymptotic CRB aligns with the theoretical expression in \eqref{eq:asypCRB}.
\begin{figure}[h]
  \centering
  \includegraphics[width=0.8\linewidth]{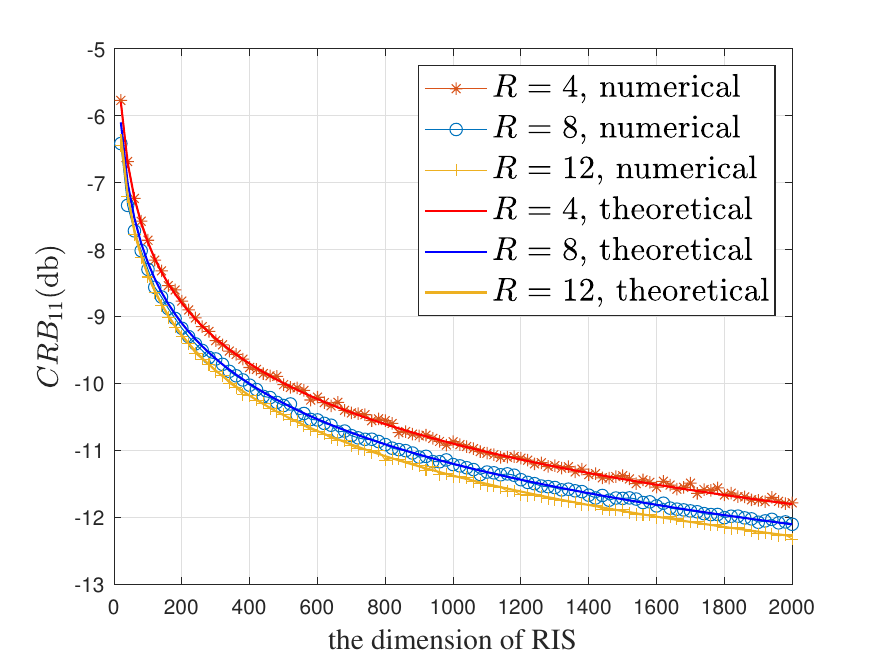}
  \caption{The value of $\text{CRB}_{11}$ versus RIS dimension under different $R$ settings. The parameters remain consistent with those specified in Fig.~\ref{fig1}.}\label{fig2}
\end{figure}

\subsection{Theoretical eigenvalue distribution}
\begin{itemize}
\item \textit{Constant modulus and random phase model:}
we consider the constant modulus model and assume that the phase follows a uniform distribution from $0$ to $2\pi$. 

\item \textit{Discrete amplitude model:}
the amplitude is constrained to binary values $1$ and $3$ with a probability of $0.5$, and the phase follows a uniform distribution from $0$ to $2\pi$. 
\end{itemize}

The theoretical eigenvalue probability density function of $\frac{1}{N^4}\textbf{D}^H \textbf{D}$ is marked by the red line, and the numerical values are marked by the blue histogram. 
The x-axis $\lambda$ represents the eigenvalues, and the y-axis $\mu(\lambda)$ represents the eigenvalues distribution.
It can be seen from Fig.~\ref{fig1:subfig2} that the theoretical eigenvalue PDF is consistent with the numerical value.
The proposed method can accurately characterize the eigenvalue distribution of the Fisher matrix.

\begin{figure}[ht]
  \centering
  \begin{minipage}[t]{0.8\linewidth}  % 这里图片位置设置为[t]竖直优先
  % {0.5\linewidth} 图片是页面高度的0.5倍
      \centering
      \label{fig:subfig5}\includegraphics[width=1\textwidth]{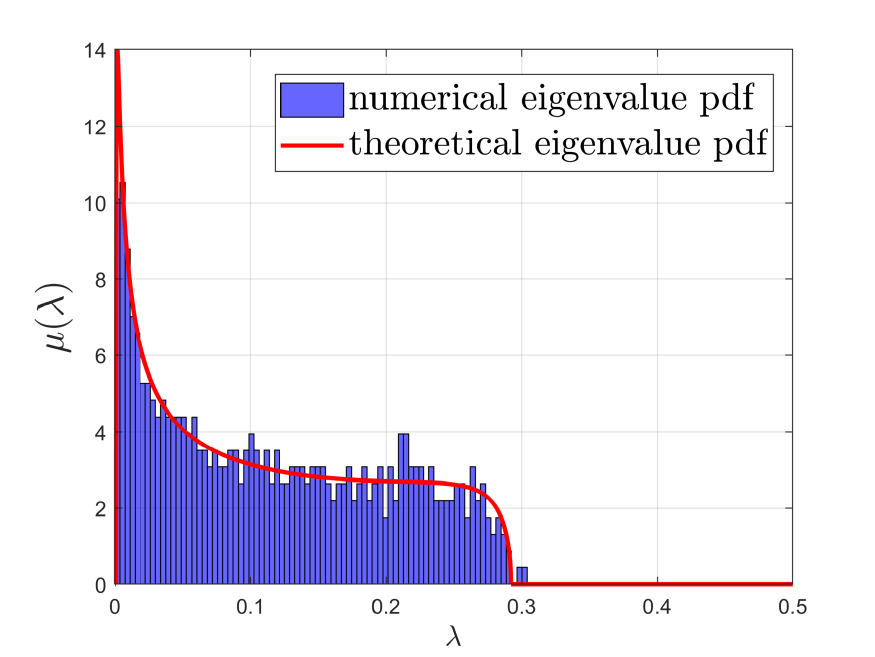}
      % [width=2in] 图片宽度设置为2英寸，这里也可以用厘米
  \end{minipage} \begin{minipage}[t]{0.8\linewidth}
      \centering
      \label{fig:subfig6}\includegraphics[width=1\textwidth]{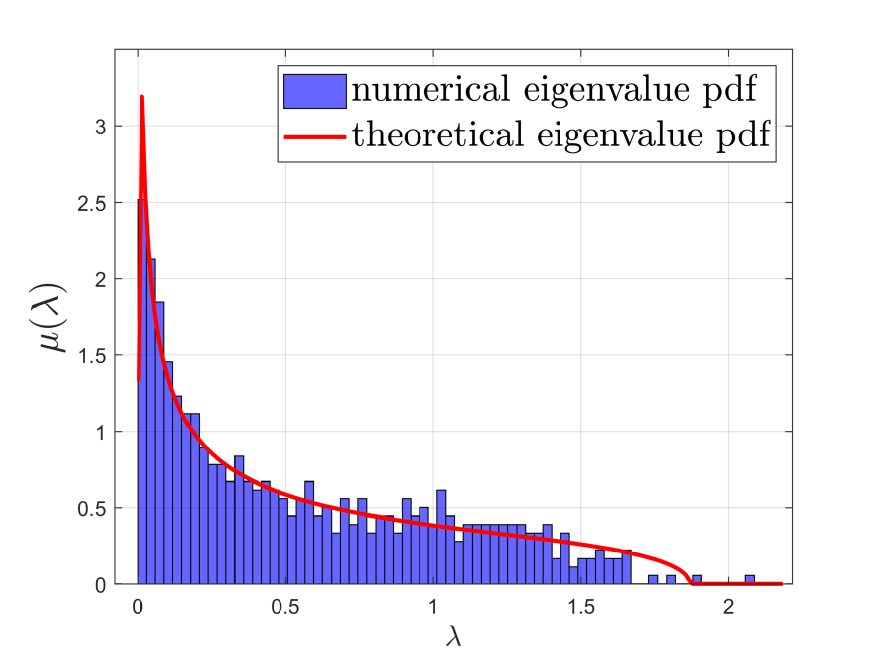}  
  \end{minipage}
  \caption{The theoretical eigenvalue distribution versus the numerical eigenvalue distribution for (\textbf{top}) constant modulus model and (\textbf{bottom}) discrete amplitude model, with $N = 2000$, $K=600$ and $R=700$.}
  \label{fig1:subfig2}
\end{figure}

\section{Conclusion}
This article considers a randomly RIS-assisted multi-user DoA estimation system. 
The first scenario is a large-scale RIS system, where the number of RIS elements tends to be infinite. 
We derive the expression for the asymptotic CRB, revealing the relationship between the asymptotic CRB and system parameters such as RIS dimension, number of sensors, SNR, etc., along with the corresponding scaling law.
In the second scenario, we consider a large system with an abundance of sensors and targets. 
Here, we introduce a CRB analysis framework based on RMT and further analyze the CRB results in the case of constant and discrete amplitude with random phase distribution.
The obtained asymptotic CRB expression provides insight for the performance analysis of random RIS-assisted DoA estimation. 
Finally, simulation results validate the accuracy of the asymptotic CRB expressions.

\vspace{12pt}
\bibliographystyle{IEEEtran}
\bibliography{ref}

% Generated by IEEEtran.bst, version: 1.12 (2007/01/11)
\begin{thebibliography}{10}
\providecommand{\url}[1]{#1}
\csname url@samestyle\endcsname
\providecommand{\newblock}{\relax}
\providecommand{\bibinfo}[2]{#2}
\providecommand{\BIBentrySTDinterwordspacing}{\spaceskip=0pt\relax}
\providecommand{\BIBentryALTinterwordstretchfactor}{4}
\providecommand{\BIBentryALTinterwordspacing}{\spaceskip=\fontdimen2\font plus
\BIBentryALTinterwordstretchfactor\fontdimen3\font minus \fontdimen4\font\relax}
\providecommand{\BIBforeignlanguage}[2]{{%
\expandafter\ifx\csname l@#1\endcsname\relax
\typeout{** WARNING: IEEEtran.bst: No hyphenation pattern has been}%
\typeout{** loaded for the language `#1'. Using the pattern for}%
\typeout{** the default language instead.}%
\else
\language=\csname l@#1\endcsname
\fi
#2}}
\providecommand{\BIBdecl}{\relax}
\BIBdecl

\bibitem{zhang2021metalocalization}
H.~Zhang, H.~Zhang, B.~Di, K.~Bian, Z.~Han, and L.~Song, ``Metalocalization: Reconfigurable intelligent surface aided multi-user wireless indoor localization,'' \emph{IEEE Transactions on Wireless Communications}, vol.~20, no.~12, pp. 7743--7757, 2021.

\bibitem{keykhosravi2023leveraging}
K.~Keykhosravi, B.~Denis, G.~C. Alexandropoulos, Z.~S. He, A.~Albanese, V.~Sciancalepore, and H.~Wymeersch, ``Leveraging ris-enabled smart signal propagation for solving infeasible localization problems: Scenarios, key research directions, and open challenges,'' \emph{IEEE Vehicular Technology Magazine}, 2023.

\bibitem{10138058}
X.~Song, J.~Xu, F.~Liu, T.~X. Han, and Y.~C. Eldar, ``Intelligent reflecting surface enabled sensing: Cramér-rao bound optimization,'' \emph{IEEE Transactions on Signal Processing}, vol.~71, pp. 2011--2026, 2023.

\bibitem{chen2022non}
K.~Chen-Hu, G.~Alexandropoulos, and A.~G. Armada, ``Non-coherent modulation with random phase configurations in ris-empowered cellular mimo systems,'' \emph{ITU J Future Evolving Technol}, vol.~3, pp. 374--387, 2022.

\bibitem{psomas2020random}
C.~Psomas, I.~Chrysovergis, and I.~Krikidis, ``Random rotation-based low-complexity schemes for intelligent reflecting surfaces,'' in \emph{2020 IEEE 31st Annual International Symposium on Personal, Indoor and Mobile Radio Communications}.\hskip 1em plus 0.5em minus 0.4em\relax IEEE, 2020, pp. 1--6.

\bibitem{zappone2021intelligent}
A.~Zappone, A.~Chaaban \emph{et~al.}, ``Intelligent reflecting surface enabled random rotations scheme for the miso broadcast channel,'' \emph{IEEE Transactions on Wireless Communications}, vol.~20, no.~8, pp. 5226--5242, 2021.

\bibitem{tegos2021distribution}
S.~A. Tegos, D.~Tyrovolas, P.~D. Diamantoulakis, C.~K. Liaskos, and G.~K. Karagiannidis, ``On the distribution of the sum of double-nakagami-m random vectors and application in randomly reconfigurable surfaces,'' 2021.

\bibitem{hu2018beyond}
S.~Hu, F.~Rusek, and O.~Edfors, ``Beyond massive mimo: The potential of positioning with large intelligent surfaces,'' \emph{IEEE Transactions on Signal Processing}, vol.~66, no.~7, pp. 1761--1774, 2018.

\bibitem{fang2023multiirsenabled}
Y.~Fang, S.~Zhang, X.~Li, J.~Xu, and S.~Cui, ``Multi-irs-enabled integrated sensing and communications,'' 2023.

\bibitem{stoica1989music}
P.~Stoica and A.~Nehorai, ``Music, maximum likelihood, and cramer-rao bound,'' \emph{IEEE Transactions on Acoustics, speech, and signal processing}, vol.~37, no.~5, pp. 720--741, 1989.

\bibitem{stoica1990music}
------, ``Music, maximum likelihood, and cramer-rao bound: further results and comparisons,'' \emph{IEEE Transactions on Acoustics, Speech, and Signal Processing}, vol.~38, no.~12, pp. 2140--2150, 1990.

\bibitem{749007}
S.~Verdu and S.~Shamai, ``Spectral efficiency of cdma with random spreading,'' \emph{IEEE Transactions on Information Theory}, vol.~45, no.~2, pp. 622--640, 1999.

\bibitem{tulino2004random}
A.~M. Tulino, S.~Verd{\'u} \emph{et~al.}, ``Random matrix theory and wireless communications,'' \emph{Foundations and Trends{\textregistered} in Communications and Information Theory}, vol.~1, no.~1, pp. 1--182, 2004.

\bibitem{6966004}
Y.~Chen, C.~Liu, Y.~Fu, H.~Wang, and W.-P. Zhu, ``Impact of path loss on the capacity of large mimo af relay networks,'' in \emph{2014 IEEE 80th Vehicular Technology Conference (VTC2014-Fall)}, 2014, pp. 1--5.

\bibitem{burda2010applying}
Z.~Burda, A.~Jarosz, J.~Jurkiewicz, M.~A. Nowak, G.~Papp, and I.~Zahed, ``Applying free random variables to random matrix analysis of financial data. part i: A gaussian case,'' 2010.

\bibitem{stoica1989maximum}
P.~Stoica, R.~L. Moses, B.~Friedlander, and T.~Soderstrom, ``Maximum likelihood estimation of the parameters of multiple sinusoids from noisy measurements,'' \emph{IEEE Transactions on Acoustics, Speech, and Signal Processing}, vol.~37, no.~3, pp. 378--392, 1989.

\bibitem{mu2021simultaneously}
X.~Mu, Y.~Liu, L.~Guo, J.~Lin, and R.~Schober, ``Simultaneously transmitting and reflecting (star) ris aided wireless communications,'' \emph{IEEE Transactions on Wireless Communications}, vol.~21, no.~5, pp. 3083--3098, 2021.

\end{thebibliography}

\end{document}